\documentclass[11pt,a4paper,reqno]{amsart}%
\usepackage[utf8]{inputenc}
\usepackage{mathrsfs}
\usepackage{dsfont}
\usepackage{hyperref}
\usepackage{amsmath}
\usepackage{amssymb}
\usepackage{amsthm}
\usepackage{amsfonts}
\usepackage{amstext}
\usepackage{amsopn}
\usepackage{amsxtra}
\usepackage{mathrsfs}
\usepackage{dsfont}
\usepackage{esint}
\usepackage{graphicx}
\newtheorem{theorem}{Theorem}

\newtheorem{remark}{Remark}

\newtheorem{lemma}{Lemma}
\newtheorem{corollary}{Corollary}

\newcommand{\ii}{\infty}
\newcommand\R{{\ensuremath {\mathbb R} }}

\newcommand\1{{\ensuremath {\mathds 1} }}
\renewcommand\phi{\varphi}
\newcommand{\gH}{\mathfrak{H}}
\newcommand{\gS}{\mathfrak{S}}

\newcommand{\wto}{\rightharpoonup}

\newcommand{\cB}{\mathcal{B}}

\newcommand{\cH}{\mathcal{H}}

\renewcommand{\epsilon}{\varepsilon}

\newcommand{\norm}[1]{ \left| \! \left| #1 \right| \! \right| }
\newcommand{\tr}{{\rm Tr}}

\renewcommand{\le}{\leqslant}
\renewcommand{\geq}{\geqslant}
\renewcommand{\leq}{\leqslant}

%ARTICLE
\title[A family of monotone quantum relative entropies]{A family of monotone quantum relative entropies}

\author[M. Lewin]{Mathieu LEWIN}
\address{CNRS \& Universit\'e de Cergy-Pontoise, Mathematics Department (UMR 8088), F-95000 Cergy-Pontoise, France} 
\email{mathieu.lewin@math.cnrs.fr}

\author[J. Sabin]{Julien SABIN}
\address{Universit\'e de Cergy-Pontoise, Mathematics Department (UMR 8088), F-95000 Cergy-Pontoise, France} 
\email{julien.sabin@u-cergy.fr}

\date{\today}

\begin{document}

\begin{abstract}
We study here the elementary properties of the relative entropy $\cH(A,B)=\tr[\phi(A)-\phi(B)-\phi'(B)(A-B)]$ for $\phi$ a convex function and $A,B$ bounded self-adjoint operators. In particular, we prove that this relative entropy is monotone if and only if $\phi'$ is operator monotone. We use this to appropriately define $\cH(A,B)$ in infinite dimension.
\end{abstract}

\thanks{\copyright\,2013 by the authors. This paper may be reproduced, in its entirety, for non-commercial purposes.}

\maketitle

%\tableofcontents

%%%%%%%%%%%%%%%%%%%%%%%%%%%%%%%%%%%%%%%%%%
%%%%%%%%%%%%%%%%%%%%%%%%%%%%%%%%%%%%%%%%%%
%\section{Introduction}
%%%%%%%%%%%%%%%%%%%%%%%%%%%%%%%%%%%%%%%%%%
%%%%%%%%%%%%%%%%%%%%%%%%%%%%%%%%%%%%%%%%%%

In this paper we introduce a relative entropy $\cH(A,B)$ of two bounded operators $A$ and $B$ on a Hilbert space and discuss its elementary properties. Our relative entropy $\cH(A,B)$ could be useful in many practical situations including quantum information theory. In a separate work~\cite{LewSab-13a}, we will use it to discuss the orbital stability of certain stationary states for the Hartree equation of an interacting gas containing infinitely many quantum particles.

For simplicity we work on the interval $I=[0,1]$ but all our results can be generalized to any finite interval. We consider two $N\times N$ hermitian matrices $A,B$ whose spectrum is a subset of $I$. For $\phi\in C^0([0,1],\R)\cap C^1((0,1),\R)$ a \emph{convex} function, we introduce the following relative entropy\footnote{This expression always makes sense if $0<B<1$. If the spectrum of $B$ contains $0$ or $1$ and if $\phi$ is not differentiable at these points, then we let $\cH(A,B)=+\ii$ except when $A=B$ on $\ker B$, $\ker(1-B)$, or $\ker B\oplus\ker(1-B)$. In this case the trace is by definition taken on the orthogonal to these subspaces.}
\begin{equation}
\boxed{\cH(A,B)=\tr\Big(\phi(A)-\phi(B)-\phi'(B)(A-B)\Big).}
\label{eq:def_relative_entropy} 
\end{equation}

We remind the reader of Klein's lemma.

\begin{lemma}[Klein~{\cite[Prop. 3.16]{OhyPet-93},~\cite[Thm 2.5.2]{Ruelle}}]\label{lem:Klein}
If $f_k,g_k:[a,b]\to\R$ are $K$ measurable functions such that $\sum_{k=1}^K f_k(x)g_k(y)\geq0$ for all $x,y\in [a,b]$, then we have
$\sum_{k=1}^K \tr\,\big( f_k(A)g_k(B)\big)\geq0$ for all hermitian matrices $a\leq A,B\leq b$.
\end{lemma}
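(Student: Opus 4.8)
The plan is to reduce the matrix inequality to the pointwise hypothesis by simultaneously expanding $A$ and $B$ in their respective spectral decompositions and tracking the overlaps between their eigenbases. Since $A$ and $B$ are hermitian with spectra in $[a,b]$, I would write
\begin{equation*}
A=\sum_i \lambda_i \,|u_i\rangle\langle u_i|,\qquad B=\sum_j \mu_j\,|v_j\rangle\langle v_j|,
\end{equation*}
where $\{u_i\}$ and $\{v_j\}$ are orthonormal bases and $\lambda_i,\mu_j\in[a,b]$. By the functional calculus for hermitian matrices, each $f_k(A)=\sum_i f_k(\lambda_i)\,|u_i\rangle\langle u_i|$ and $g_k(B)=\sum_j g_k(\mu_j)\,|v_j\rangle\langle v_j|$, which only requires $f_k,g_k$ to be defined on the (finite) spectrum—so measurability is no obstacle here.

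The central computation is to evaluate each trace. Multiplying the two spectral expansions and using $\tr\big(|u_i\rangle\langle u_i|\,v_j\rangle\langle v_j|\big)=|\pscal{u_i,v_j}|^2$, I would obtain
\begin{equation*}
\sum_{k=1}^K \tr\big(f_k(A)g_k(B)\big)=\sum_{i,j}|\pscal{u_i,v_j}|^2\,\Big(\sum_{k=1}^K f_k(\lambda_i)\,g_k(\mu_j)\Big).
\end{equation*}
The key structural point is that the single trace has been rewritten as a double sum in which the spectral data of $A$ and of $B$ are decoupled inside the inner bracket, while the geometry of the two eigenbases is confined to the manifestly nonnegative weights $|\pscal{u_i,v_j}|^2$.

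The conclusion is then immediate: for each fixed pair $(i,j)$ the eigenvalues satisfy $\lambda_i,\mu_j\in[a,b]$, so the hypothesis $\sum_k f_k(x)g_k(y)\ge0$ applied at $x=\lambda_i$, $y=\mu_j$ forces the inner bracket to be nonnegative. As a nonnegative combination of nonnegative weights, the whole double sum is $\ge0$, which is the desired inequality. There is no genuine analytic difficulty; the only point requiring care is the bookkeeping of two distinct eigenbases and the verification that the cross term indeed collapses to $|\pscal{u_i,v_j}|^2$, which I regard as the main (and only) place where one might slip.
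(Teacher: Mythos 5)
Your proof is correct and is precisely the standard argument for Klein's lemma found in the cited references (the paper itself states the lemma without proof): diagonalize $A$ and $B$ separately, reduce the trace to the double sum $\sum_{i,j}|\pscal{u_i,v_j}|^2\sum_k f_k(\lambda_i)g_k(\mu_j)$, and invoke the pointwise hypothesis together with the nonnegativity of the overlap weights. The computation $\tr\big(|u_i\rangle\langle u_i|\,|v_j\rangle\langle v_j|\big)=|\pscal{u_i,v_j}|^2$ is carried out correctly, so there is nothing to add.
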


Because of our assumption that $\phi$ is convex, we obtain from the lemma that the trace in~\eqref{eq:def_relative_entropy} is always non-negative. If $\phi$ is strictly convex, it can be proved that $\cH(A,B)=0$ only when $A=B$. Therefore, $\cH(A,B)$ is an appropriate object to measure how $A$ is close to $B$.

The formula~\eqref{eq:def_relative_entropy} is motivated by several physical situations. First, the usual von Neumann entropy
$$\cH_{\text{vN}}(A,B)=\tr A\big(\log A-\log B\big)$$
corresponds to taking 
$$\phi_{\rm vN}(x)=x\log(x),\qquad x\in [0,1].$$ 
This is not the only function of interest, however. If we take two quasi-free states $\alpha$ and $\beta$ on the CAR (resp. CCR) algebra of a (finite-dimensional) Hilbert space $\gH$, then the corresponding von Neumann entropy can be expressed in terms of the one-particle density matrices $A$ and $B$ of $\alpha$ and $\beta$, leading to the expression~\eqref{eq:def_relative_entropy} with, this time,
$$\phi_{\text{CAR}}(x)=x\log(x)+(1-x)\log(1-x),\qquad x\in [0,1]$$
and
$$\phi_{\text{CCR}}(x)=x\log(x)-(1+x)\log(1+x),\qquad x\in [0,1],$$
respectively~\cite{BacLieSol-94}. It seems therefore natural to study the properties of $\cH$ for a general convex function $\phi$, and this is the main purpose of this paper. We will particularly study the possibility to define $\cH$ for two bounded operators in an \emph{infinite} dimensional Hilbert space $\gH$. \emph{Monotonicity} then plays a crucial role and we discuss it in the next section.

%%%%%%%%%%%%%%%%%%%%%%%%%%%%%%%%%%%%%%%%%%
%%%%%%%%%%%%%%%%%%%%%%%%%%%%%%%%%%%%%%%%%%
\section{Monotonicity}
%%%%%%%%%%%%%%%%%%%%%%%%%%%%%%%%%%%%%%%%%%
%%%%%%%%%%%%%%%%%%%%%%%%%%%%%%%%%%%%%%%%%%

Our goal is to define $\cH(A,B)$ for $A$ and $B$ two bounded operators on an infinite dimensional Hilbert space. The usual technique employed in statistical mechanics is to first project the system into a finite dimensional space and then to take the limit when the dimension goes to $+\ii$. This means that we will define
\begin{equation}
\cH(A,B):=\lim_{k\to\ii}\cH(P_kAP_k,P_kBP_k)
\label{eq:def_limit} 
\end{equation}
where $P_k$ is an increasing sequence of finite-rank orthogonal projectors on $\gH$. For a general (convex) function $\phi$, the limit is not guaranteed to exist. Furthermore, the limit could be different for two sequences $P_k$ and $P'_k$. One natural way to ensure a good behavior is to add the constraint that $\cH$ is \emph{monotone}, that is, satisfies the inequality 
\begin{equation}
\boxed{\cH(PAP,PBP)\leq \cH(A,B) }
\label{eq:monotonicity}
\end{equation}
for every finite self-adjoint matrices $0\leq A,B\leq 1$ and every orthogonal projector $P$. The limit on the right of~\eqref{eq:def_limit} is then monotone and $\cH(A,B)$ is well defined in $\R^+\cup\{+\ii\}$. Our purpose in this section is to characterize the convex functions $\phi$ for which $\cH$ is monotone in the sense of~\eqref{eq:monotonicity}.

Let us emphasize that another notion of monotonicity is often used in the literature. Assume that our (finite-dimensional) Hilbert space is the tensor product $\gH=\gH_1\otimes\gH_2$. Define $\Phi(A)=\tr_{\gH_2}(A)$, the partial trace with respect to the second Hilbert space. Then the von Neumann relative entropy (corresponding to $\phi=\phi_{\rm vN}$) satisfies that 
\begin{equation}
\cH_{\rm vN}(\Phi(A),\Phi(B))\leq \cH_{\rm vN}(A,B) 
\label{eq:strong_monotonicity}
\end{equation}
for every $0\leq A,B\leq 1$ with $\tr(A)=\tr(B)$. Indeed, the property holds for any trace-preserving completely positive map $\Phi$ between two Hilbert spaces~\cite{OhyPet-93}. Since $A\mapsto PAP+P^\perp AP^\perp$ is completely positive and trace-preserving, it is easy to deduce that the von Neumann entropy also satisfies~\eqref{eq:monotonicity}. Furthermore, it can be shown (see~\cite[Chap. 2]{OhyPet-93}) that $\phi(x)=ax\log(x)+bx+c$ with $a\geq0$ are the only functions for which~\eqref{eq:strong_monotonicity} is satisfied.

The weaker monotonicity property~\eqref{eq:monotonicity} is associated with a decomposition of the Hilbert space $\gH$ into a direct sum $\gH=\gH_1\oplus\gH_2$, instead of a tensor product. Indeed, if we have a quasi-free state on the CCR (resp. CAR) algebra of a Hilbert space $\gH$, the restriction of this state to a sub-algebra of observables in $\gH_1\subset\gH$ is obtained by taking a \emph{partial trace for the state} and by \emph{replacing the one-particle density matrix $A$ by $P_1AP_1$}. The tensor structure at the level of states is transformed into a direct sum at the level of density matrices. From this discussion we conclude that the relative entropy associated with $\phi_{\rm CCR/CAR}$ (defined above) must also satisfy the monotonicity property~\eqref{eq:monotonicity}, even if they do not satisfy the stronger monotonicity~\eqref{eq:strong_monotonicity}.

In this section we want to characterize the convex functions $\phi$ for which $\cH$ is monotone in the sense of~\eqref{eq:monotonicity}. In a recent paper~\cite{AudHiaPet-10}, Audenaert, Hiai and Petz have studied the strong sub-additivity of the generalized entropy $A\mapsto -\tr\;\phi(A)$ when $\phi'$ is operator monotone. Petz then conjectured that the operator monotonicity of $\phi'$ is necessary. Here we express everything in terms of the relative entropy~\eqref{eq:def_relative_entropy} and we prove Petz's conjecture in this setting.

\begin{theorem}[Monotonicity]\label{thm:monotonicity}
Let $\phi\in C^0([0,1],\R)\cap C^1((0,1),\R)$ be a convex function. Then the following are equivalent

\begin{enumerate}
\item $\phi'$ is operator monotone on $(0,1)$;

\smallskip

\item For any linear map $X:\gH_1\to\gH_2$ on finite-dimensional spaces $\gH_1$ and $\gH_2$ with $X^*X\leq1$, and for any $0\leq A,B\leq 1$ on $\gH_1$, we have 
\begin{equation}
\cH(XA X^*,XBX^*)\leq \cH(A,B),
\label{eq:monotonicity_2}
\end{equation}
with $\cH(A,B)$ defined in~\eqref{eq:def_relative_entropy}.
\end{enumerate}
\end{theorem}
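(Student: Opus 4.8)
The plan is to prove the two implications separately: $(1)\Rightarrow(2)$ by reduction to elementary building blocks, and $(2)\Rightarrow(1)$ by an infinitesimal argument recovering L\"owner's positivity criterion; the latter, being exactly Petz's conjecture, is where I expect the main obstacle. For $(1)\Rightarrow(2)$ the first move is to reduce a general contraction $X$ to an orthogonal projection by \emph{unitary dilation}. With the defect operators $D_X=(1-X^*X)^{1/2}$ and $D_{X^*}=(1-XX^*)^{1/2}$, the operator matrix $U=\left(\begin{smallmatrix}X&D_{X^*}\\ D_X&-X^*\end{smallmatrix}\right)$ is unitary (the off-diagonal blocks of $U^*U$ vanish by the intertwining $Xf(X^*X)=f(XX^*)X$), and a direct computation shows that $XAX^*$ is the $\gH_2$-corner of $U(A\oplus0)U^*$. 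Since $\cH$ is invariant under unitary conjugation and $\cH(A\oplus0,B\oplus0)=\cH(A,B)$ (the zero block contributes nothing to the trace in \eqref{eq:def_relative_entropy}), the inequality \eqref{eq:monotonicity_2} follows from the special case \eqref{eq:monotonicity} in which $X$ is a projection.

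It then remains to prove \eqref{eq:monotonicity} assuming $\phi'$ operator monotone. Here I would invoke L\"owner's integral representation to write $\phi'$ as an affine function plus a nonnegative superposition of the elementary operator monotone functions $x\mapsto(\lambda-x)^{-1}$, $\lambda\notin(0,1)$. Integrating, $\phi$ decomposes (modulo an affine part, which gives $\cH\equiv0$) into a term $\tfrac{b}{2}x^2$ with $b\ge0$ and a nonnegative superposition of the functions $x\mapsto-\log|\lambda-x|$. Because $\cH$ is linear in $\phi$ and each summand is nonnegative by Klein's Lemma~\ref{lem:Klein}, it suffices to verify \eqref{eq:monotonicity} for these two families and then recombine (a monotone/dominated convergence argument handles the interchange of the positive measure with the trace). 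For $\phi(x)=x^2$ one computes $\cH(A,B)=\tr\big[(A-B)^2\big]$, and the inequality is just the contractivity of compression in Hilbert--Schmidt norm, $\tr\big[(X(A-B)X^*)^2\big]\le\tr\big[(A-B)^2\big]$, which holds since $X^*X\le1$. The logarithmic block is the analytic heart of this direction: I would prove $\cH(PAP,PBP)\le\cH(A,B)$ for $\phi(x)=-\log|\lambda-x|$ by combining the operator Jensen inequality $\phi(PAP)\le P\phi(A)P$ (legitimate after subtracting the constant $\phi(0)$, since this $\phi$ is operator convex) with the operator monotonicity of $\phi'(x)=(\lambda-x)^{-1}$ to dominate the linear term $\tr[\phi'(B)(A-B)]$; alternatively one may insert the resolvent representation $\log(\lambda-x)=\log\lambda-\int_0^\infty\!\big(\tfrac{1}{s+\lambda}-\tfrac{1}{s+\lambda-x}\big)\,ds$ and carry out the estimate resolvent by resolvent.

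For $(2)\Rightarrow(1)$ I would argue infinitesimally, specializing \eqref{eq:monotonicity_2} to $A=B+tH$. A second-order expansion via the Daleckii--Krein formula gives, in an eigenbasis of $B$, the quadratic approximation $\cH(B+tH,B)=\tfrac{t^2}{2}\sum_{i,j}L_{ij}\,|H_{ij}|^2+o(t^2)$, where $L_{ij}=[\phi'(b_i)-\phi'(b_j)]/(b_i-b_j)$ is the L\"owner matrix of $\phi'$. Convexity of $\phi$ already forces $L$ to have nonnegative \emph{entries}, and this is all that Klein's Lemma yields; the content of the theorem is that \eqref{eq:monotonicity_2} must upgrade this to the strictly stronger assertion that every such $L$ is \emph{positive semidefinite}, which by L\"owner's theorem is precisely operator monotonicity of $\phi'$. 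The main obstacle is that diagonal or commuting compressions only reproduce the entrywise inequalities: one must choose $B$ and the contraction $X$ so that they do \emph{not} commute, so that the second-order term on the left of \eqref{eq:monotonicity_2} mixes the off-diagonal entries of $H$ and exposes the full quadratic form $v^*Lv$. Concretely I would first treat the $2\times2$ case, where the target inequality is $\phi''(b_1)\phi''(b_2)\ge L_{12}^2$, by exhibiting an explicit non-diagonal $X$ that drives the left-hand side of \eqref{eq:monotonicity_2} up to this threshold, and then handle general $n$ by the analogous construction — or, by contraposition, start from a direction with $v^*Lv<0$ and build from it an explicit pair $(A,B)$ and map $X$ violating \eqref{eq:monotonicity_2}.
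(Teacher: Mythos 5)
Your overall architecture is sound and partly coincides with the paper's (the unitary-dilation reduction of a general contraction to a projection, the L\"owner decomposition of $\phi$ into an affine part, a quadratic part and a nonnegative superposition of $x\mapsto-\log(t+x)$ and $-\log(t+1-x)$, and the correct identification of $(2)\Rightarrow(1)$ with L\"owner-matrix positivity), and the $x^2$ block is handled correctly. But both analytic cores are missing, and one of your two suggested routes for the logarithmic block is provably a dead end. If ``resolvent by resolvent'' means proving \eqref{eq:monotonicity_2} separately for each building block $\psi_u(x)=(u+x)^{-1}$ and then integrating in $u$, this cannot work: $\psi_u'(x)=-(u+x)^{-2}$ is not operator monotone (the map $z\mapsto -z^{-2}$ does not preserve the upper half-plane), so by the very equivalence you are proving there exist $(A,B,P)$ for which $\cH_{\psi_u}$ fails monotonicity; the inequality for $-\log(t+\cdot)$ is not a sum of inequalities for resolvents. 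Your first route is likewise not an argument: the operator Jensen inequality controls $\tr\,\phi(PAP)$ against $\tr\,P\phi(A)P$, but the $B$-dependent terms of $\cH$ enter with both signs and admit no term-by-term domination; operator convexity of $\phi$ (which is all Jensen uses) is strictly weaker than operator monotonicity of $\phi'$ and does not suffice. The paper's actual proof of this block is a specific device of Ando--Petz: for $\phi=-\tfrac12\log$, $\cH(A^{-1},B^{-1})$ equals the classical relative entropy of the Gaussians $f_A,f_B$, compression by $P$ corresponds to taking marginals via the Schur complement identity $(A_{11}-A_{12}A_{22}^{-1}A_{21})^{-1}=(A^{-1})_{11}$, and classical relative entropy is monotone under marginals. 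You need this, or a genuine substitute, to close the gap.

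For $(2)\Rightarrow(1)$ --- which is Petz's conjecture and the paper's main novelty --- you correctly diagnose that positivity of $\cH(B+tH,B)$ only yields the entrywise nonnegativity of the L\"owner matrix of $\phi'$, and that a noncommuting compression is needed to expose the full quadratic form $v^*Lv$; but you stop exactly there. No configuration $(A,B,X)$ is exhibited, in either the $2\times2$ or the general case, so this direction remains a plan rather than a proof. For comparison, the paper avoids L\"owner matrices altogether: it takes $X=e^{-\epsilon C}$ with $C\geq0$ supported in ${\rm Ran}\,P$, compares $\cH(XAX,XA_{11}X+A_{22})$ with $\cH(A,A_{11}+A_{22})$, and a first-order expansion in $\epsilon$ yields the operator inequality $(A\phi'(A))_{11}\geq A_{11}\phi'(A_{11})$ for all $0<A<1$, which is a known characterization of operator monotonicity of $\phi'$ via operator convexity of $x\mapsto x\phi'(x)$ (Bhatia, Thms V.2.3 and V.2.9). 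Some such concrete mechanism is indispensable here.
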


Typical functions $\phi$ satisfying the conditions in Theorem~\ref{thm:monotonicity} are provided in~\cite{AudHiaPet-10} and include
$$\phi(x)=\begin{cases}
(t+x)\log(t+x),&t\geq0\\
-\log(t+x),&t>0,\\
-x^m,&0<m\leq 1,\\
x^m,&1\leq m\leq 2.\\
\end{cases}$$
By using L\"owner's characterization of operator-monotone functions, it is possible to write an integral representation formula for the allowed functions $\phi$. This is done later in the proof, see~\eqref{eq:Lowner2}.

\begin{proof}
As a preliminary we remark that $x\mapsto x\phi'(x)$ possesses a limit when $x\to0^+$. Indeed, since $\phi$ is convex by assumption, then $x\mapsto x\phi'(x)-\phi(x)$ must be increasing on $(0,1)$. Also, $x\phi'(x)-\phi(x)\geq -\phi(0)$ and we conclude that $x\mapsto x\phi'(x)-\phi(x)$ possesses a limit when $x\to0^+$. Since $\phi$ is continuous by assumption, then $x\phi'(x)$ must have a limit as well, and this limit is $\geq0$. The same argument shows that $(1-x)\phi'(x)$ has a limit when $x\to 1^-$.

Next we prove that (2) implies (1).
Let $A$ be any matrix such that $0<\eta<A<1-\eta$, and $P$ be any orthogonal projector. We use the shorthand notation $A_{11}=PAP$, $A_{22}=P^\perp AP^\perp$ and so on. Let $X$ be any self-adjoint matrix such that $P^\perp X=XP^\perp=P^\perp$ and $0< X\leq1$. Then we have by assumption
\begin{align*}
\cH\Big(XA X,XA_{11}X+A_{22}\Big)&=\cH\Big(XA X,X(A_{11}+A_{22})X\Big)\\
&\leq \cH\big(A,A_{11}+A_{22}\big). 
\end{align*}
Note that 
\begin{multline*}
\tr_\gH\; \phi'\big(A_{11}+A_{22}\big)\big(A-A_{11}-A_{22}\big)\\
=\tr_{P\gH\oplus P^\perp \gH}
\begin{pmatrix}
\phi'\big(A_{11}\big) & 0\\
0 & \phi'\big(A_{22}\big)
\end{pmatrix}
\begin{pmatrix}
0 & A_{12}\\
A_{21} & 0
\end{pmatrix}=0.
\end{multline*}
A similar argument for the term involving $X$ gives
\begin{align*}
 0&\leq \cH\big(A,A_{11}+A_{22}\big) - \cH\Big(XA X,XA_{11}X+A_{22}\Big)\\
&=\tr_\gH\Big(\phi(A)-\phi(A_{11}+A_{22})-\phi(XAX)+\phi(XA_{11}X+A_{22}) \Big)\\
&=\tr_\gH\Big(\phi(A)-\phi(XAX)\Big)-\tr_{P\gH}\Big(\phi(A_{11})-\phi((XAX)_{11}) \Big).
\end{align*}
Now we choose $X:=\exp(-\epsilon C)$ for some $C>0$ living in the range of $P$ and $\epsilon>0$. We remark that $XAX=A-\epsilon(CA+AC)+O(\epsilon^2)$. Using the differentiability of $A\mapsto \tr\phi(A)$, we obtain 
$$\tr_\gH\Big(\phi(A)-\phi(e^{-\epsilon C}Ae^{-\epsilon C})\Big)=2\epsilon \tr_\gH \big(CA\phi'(A)\big)+o(\epsilon).$$
The same argument for the other term and the fact that the inequality is valid for all $C>0$ on $P\gH$ and all $\epsilon>0$ gives
$$\big(A\phi'(A)\big)_{11}\geq A_{11}\phi'(A_{11}).$$
This inequality is valid for all $0<A<1$. We have already seen that $x\mapsto x\phi'(x)$ can be extended by continuity at $x=0$. So the property stays true for all $0\leq A< 1$. By~\cite[Thm V.2.3 \& Thm V.2.9]{Bhatia}, this exactly proves that the function $\phi'$ is operator monotone on $I$. These results give us the additional information that $\lim_{x\to0^+}x\phi'(x)\leq0$, and since we already had the other inequality we deduce that $\lim_{x\to0^+}x\phi'(x)=0$.

Next we turn to the proof that (1) implies (2), for which we follow arguments from~\cite{AndPet-09,AudHiaPet-10}. By L\"owner's theorem (see~\cite[Corollary V.4.5]{Bhatia}), we can write $\phi'$ as
\begin{equation}
\phi'(x)=a+b\int_{-1}^1 \frac{2x-1}{1-\lambda(2x-1)}\,d\nu(\lambda)
\label{eq:Lowner} 
\end{equation}
with $b\geq0$ and $\nu$ a Borel probability measure on $[-1,1]$. 
Integrating this formula, we find
$$\phi(x)=ax+c-\frac{b}2\int_{-1}^1\left(\frac{2x-1}{\lambda}+\frac{\log\big(1+\lambda (1-2x)\big)}{\lambda^2}\right)\,d\nu(\lambda).$$
The fact that $\phi$ possesses a limit at $0$ and $1$ implies that $\nu(\{-1\})=\nu(\{1\})=0$. From this we can actually prove that 
\begin{equation}
\lim_{x\to0^+}x\phi'(x)=\lim_{x\to1^-}(1-x)\phi'(x)=0.
\label{eq:limits_phi} 
\end{equation}
After letting $t=-(1+\lambda)/(2\lambda)$ for $\lambda\in(-1,0)$ and $t=(1-\lambda)/(2\lambda)$ for $\lambda\in(0,1)$, $\phi$ can also be written as
\begin{multline}
\phi(x)=a'x+c'-\int_{0}^\ii\left(\log(t+x)-\log(t+1/2)-\frac{2x-1}{2t+1}\right)\,d\nu_1(t)\\
-\int_{0}^\ii\left(\log(t+1-x)-\log(t+1/2)+\frac{2x-1}{2t+1}\right)\,d\nu_2(t)
\label{eq:Lowner2}
\end{multline}
with $\nu_1$ and $\nu_2$ two positive Borel measures satisfying 
$$\int_{0}^\ii \frac{\big(d\nu_1+d\nu_2\big)(t)}{(2t+1)^2}<\ii.$$
The constraint that $\phi$ admits limits at $0$ and $1$ now means that
\begin{equation}
-\int_0^1\log(t)\big(d\nu_1+d\nu_2\big)(t)<\ii 
\label{eq:measure_finite_log}
\end{equation}
and it implies $\nu_1(\{0\})=\nu_2(\{1\})=0$, as we have already said before. The parameters are given in the following table for the usual distributions.

\begin{center}
\begin{tabular}{|c|c|c|c|c|}
\hline
 & $d\nu_1/dt$&$d\nu_2/dt$&$a'$&$c'$\\
\hline
von Neumann & 1 & 0 & $1-\log2$ &$-1/2$ \\
Fermi-Dirac & 1 & 1 & 0 &$-\log2$\\
Bose-Einstein & $\1(0\le t\le1)$ & 0 & $-\log3$ & $\log2-\log3$ \\
\hline
\end{tabular}
\end{center}

It suffices to prove the monotonicity for $\phi(x)=-\log(t+x)$ and for $\phi(x)=-\log(t+1-x)$ with $t\geq0$. We start by considering the case $t=0$ and we let $\phi(x)=-(1/2)\log(x)$. We follow the method of~\cite{AndPet-09}. For $A$ a positive $n\times n$ matrix, we introduce the corresponding normalized Gaussian, defined by
$$f_A(x):=\frac{\sqrt{\det(A)}}{(2\pi)^{n/2}}\exp\left(-\frac{x^TAx}{2}\right).$$
A simple calculation shows that the \emph{classical} entropy of $f_A$ is
\begin{align*}
-\int_{\R^n}f_A(x)\log f_A(x)\,dx&=-\frac{\log\det(A)}{2}+ \frac{n\log(2\pi e)}{2}\\
&=\frac{\tr\log (A^{-1})}{2}+ \frac{n\log(2\pi e)}{2},
\end{align*}
and that the \emph{classical} relative entropy of two Gaussians $f_A$ and $f_B$ is
\begin{equation*}
\cH_{\rm class}(f_A,f_B)=\int_{\R^n}f_A(x)\log \frac{f_A(x)}{f_B(x)}\,dx=\cH(A^{-1},B^{-1}),
\end{equation*}
see~\cite[Eq. (21)--(22)]{AndPet-09}. Now, let $P$ be any orthogonal projection on $\gH$. Up to a change of variables we may assume for simplicity that $x=(x_1,x_2)$ with $Px=(x_1,0)$ for all $x\in\R^n$. The marginal of a Gaussian is computed in~\cite[Lemma 4]{AndPet-09} and it is equal to
$$\int f_A(x_1,x_2)\,dx_2=f_{A_{11}-A_{12}A_{22}^{-1}A_{21}}(x_1)$$
where we have used the decomposition
$$A=\begin{pmatrix}
A_{11}&A_{12}\\
A_{21}&A_{22}
\end{pmatrix}.$$
Noticing that $(A_{11}-A_{12}A_{22}^{-1}A_{21})^{-1}=(A^{-1})_{11}$ by the Schur complement formula, we deduce that
% $$\cH_{\rm class}\left(f_{A^{-1}}, f_{B^{-1}}\right)=\cH(A,B)$$
% and
$$\cH_{\rm class}\left(\int f_{A^{-1}} dx_2,\int f_{B^{-1}}dx_2\right)=\cH\big(A_{11},B_{11}\big)=\cH(PA P,PB P).$$
The inequality $\cH(PA P,PB P)\leq \cH(A,B)$ now follows from the well-known monotonicity of the classical relative entropy in terms of marginals. The argument is exactly the same if $\log(x)$ is replaced by $\log(t+x)$ and $\log(t+1-x)$, since $(t+A)_{11}=t+A_{11}$ and $(t+1-A)_{11}=t+1-A_{11}$.

We have proved the monotonicity for a projection $P$. If $U:\gH_1\to\gH_2$ is a partial isometry between two spaces such that $U^*U=1_{\gH_1}$ and $UU^*\leq 1_{\gH_2}$, it is obvious that $\cH(UAU^*,UBU^*)=\cH(A,B)$ since $U\gH_1$ is then isometric to a subspace of $\gH_2$. On the other hand, if we have $UU^*=1_{\gH_2}$ and $U^*U\leq 1_{\gH_1}$, then $\cH(UAU^*,UBU^*)\leq \cH(A,B)$ by the previous result on projections, since $U^*\gH_2$ is now isometric to a subspace of $\gH_1$.

In order to deduce the result for an arbitrary operator $X:\gH\to\gH$ with $X^*X\leq1$, we double the dimension of the space, that is, we introduce the partial isometry 
$$\begin{array}{rccrcl}
U:&\gH&\to&\gH&\oplus&\gH\\
&f&\mapsto&Xf&\oplus &\sqrt{1-X^*X}f
\end{array}$$
and we infer $\cH(UA U^*,UB U^*)= \cH(A,B)$, since $U^*U=1_{\gH}$.
Note that
$$UAU^*=\begin{pmatrix}
XA X & XA\sqrt{1-X^*X}\\
\sqrt{1-X^*X}A X & \sqrt{1-X^*X}A\sqrt{1-X^*X}\\
\end{pmatrix}.$$
If we project onto the first Hilbert space of $\gH\oplus\gH$, this decreases the relative entropy and we obtain the result. The case of $X:\gH_1\to\gH_2$ with different Hilbert spaces and $X^*X\leq1$ follows from the polar decomposition $X=UY$ with $U$ a partial isometry and $0\leq Y=\sqrt{X^*X}\leq1$. This concludes our proof that (1) implies (2).
\end{proof}

After having characterized the functions $\phi$ for which $\cH$ is monotone, we quickly mention the consequences on the entropy $S(A)=-\tr\;\phi(A)$, for completeness. 

One of the fundamental properties of the von Neumann entropy (defined with $\phi_{\rm vN}(x)=x\log(x)$) is its strong subadditivity (SSA), which was proved by Lieb and Ruskai in \cite{LieRus-73a,LieRus-73b}. The fact that SSA is very important for large quantum systems was first remarked by Robinson and Ruelle in \cite{RobRue-67}, see also \cite{Wehrl-78} and~\cite{HaiLewSol_2-09}. It is well-known that SSA is equivalent to the monotonicity of the relative von Neumann entropy $\cH$ under completely positive trace-preserving maps~\cite{OhyPet-93,Carlen-10,Wehrl-78}, as in~\eqref{eq:strong_monotonicity}.

The usual definition of SSA is expressed using a decomposition of the ambient Hilbert space $\gH$ into a tensor product of the form $\gH=\gH_1\otimes\gH_2\otimes\gH_3$, and we do not recall it here. In our setting there is another concept of SSA, which is associated with a decomposition of $\gH$ into a direct sum $\gH=\gH_1\oplus\gH_2\oplus\gH_3$ of three spaces (instead of a tensor product), and which was considered before in~\cite{AudHiaPet-10}. For $\phi=\phi_{\rm CCR/CAR}$, this is just the usual SSA for quasi-free states, expressed in terms of the corresponding one-particle density matrices.
We denote by $P_j$ the corresponding orthogonal projections, as well as $P_{12}=P_1+P_2$ and $P_{123}=1$. 

\begin{corollary}[SSA for the entropy]
We assume that $\phi\in C^0([0,1],\R)$ and that $\phi'$ is operator monotone on $(0,1)$.
Let $S(A)=-\tr_\gH\,\phi(A)$, with $\dim(\gH)<\ii$. Then we have 
$$S(P_{123}AP_{123})+S(P_2AP_2)\leq S(P_{12}AP_{12})+S(P_{23}AP_{23}),$$
for all self-adjoint matrices $0\leq A\leq 1$ on $\gH$.
\end{corollary}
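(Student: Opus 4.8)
The plan is to recognize this direct-sum strong subadditivity as a single application of the monotonicity inequality to a carefully chosen reference operator. Since $P_{123}=1$ and $S(X)=-\tr\,\phi(X)$, the claimed inequality is equivalent to the trace inequality
$$\tr\,\phi(P_{12}AP_{12})+\tr\,\phi(P_{23}AP_{23})\leq \tr\,\phi(A)+\tr\,\phi(P_2AP_2),$$
so it suffices to establish this. I would first reduce to the case $0<\eta\leq A\leq 1-\eta$: replacing $A$ by $(1-2\eta)A+\eta$ forces all the compressions $P_1AP_1$, $P_2AP_2$, $P_{12}AP_{12}$, $P_{23}AP_{23}$ to have spectrum in $[\eta,1-\eta]\subset(0,1)$, and since $\phi\in C^0([0,1])$ the map $X\mapsto\tr\,\phi(X)$ is continuous on $0\leq X\leq 1$, so the general case follows by letting $\eta\to0^+$.

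The key construction is the block-diagonal reference operator
$$B:=\big(P_1AP_1\big)\oplus\big(P_{23}AP_{23}\big),$$
taken with respect to the two-fold splitting $\gH=\gH_1\oplus(\gH_2\oplus\gH_3)$, together with the projector $P_{12}$. I would apply part (2) of Theorem~\ref{thm:monotonicity} with $X=P_{12}$ regarded as a map from $\gH$ onto its range $\gH_1\oplus\gH_2$, which gives $\cH(P_{12}AP_{12},P_{12}BP_{12})\leq\cH(A,B)$, and then compute both relative entropies explicitly. The crucial simplification is that in both of them the first-order term $\tr\,[\phi'(\,\cdot\,)(\,\cdot\,-\,\cdot\,)]$ vanishes identically: with this $B$, the operator $A-B$ has vanishing diagonal blocks with respect to $\gH_1\oplus(\gH_2\oplus\gH_3)$ while $\phi'(B)$ is block-diagonal, so their product is purely off-diagonal and traceless, and the same holds after compression by $P_{12}$. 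Consequently
$$\cH(A,B)=\tr\,\phi(A)-\tr\,\phi(P_1AP_1)-\tr\,\phi(P_{23}AP_{23})$$
and
$$\cH(P_{12}AP_{12},P_{12}BP_{12})=\tr\,\phi(P_{12}AP_{12})-\tr\,\phi(P_1AP_1)-\tr\,\phi(P_2AP_2).$$
Subtracting, the common term $\tr\,\phi(P_1AP_1)$ cancels and~\eqref{eq:monotonicity} yields exactly the desired trace inequality.

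I expect the main obstacle to be purely technical, namely ensuring that the two relative entropies are finite so that the cancellation of the $\tr\,\phi(P_1AP_1)$ terms is legitimate; this is precisely what the preliminary reduction to $0<\eta\leq A\leq1-\eta$ guarantees, since then every operator appearing has spectrum in a compact subinterval of $(0,1)$ on which $\phi'$ is finite. Viewing $X=P_{12}$ as a map onto its range $\gH_1\oplus\gH_2$ also conveniently removes any spurious contribution from $\gH_3$ (where both compressed operators would otherwise vanish and produce ill-defined $\phi'(0)\cdot 0$ terms). Everything else is the routine block-matrix computation of the traces sketched above, using that $\phi'(B)$ and $\phi'(P_{12}BP_{12})$ are block-diagonal against the off-diagonal differences $A-B$ and $P_{12}AP_{12}-P_{12}BP_{12}$.
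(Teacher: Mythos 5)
Your proof is correct and follows essentially the same route as the paper: a single application of the finite-dimensional monotonicity of Theorem~\ref{thm:monotonicity} under the compression $P_{12}$, after which each relative entropy collapses to a difference of entropies because the first-order terms $\tr\,\phi'(B)(\,\cdot\,)$ vanish against off-diagonal blocks. The only (minor) deviation is your choice of reference operator $B=P_1AP_1\oplus P_{23}AP_{23}$ instead of the paper's $P_{23}AP_{23}$, which makes the cancellation of the linear and $\tr\,\phi(P_1AP_1)$ terms completely clean, and your $\eta$-regularization tidily handles the boundary cases where $\phi'$ may blow up at $0$ or $1$.
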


\begin{proof}
By using the monotonicity of the relative entropy, we deduce that 
$\cH(P_{123}AP_{123},P_{23}AP_{23})\geq\cH(P_{12}AP_{12},P_{2}AP_{2})$.
It is straightforward to verify that $\cH(P_{123}AP_{123},P_{23}AP_{23})=S(P_{123}AP_{123})-S(P_{23}AP_{23})$ and that $\cH(P_{12}AP_{12},P_{2}AP_{2})=S(P_{12}AP_{12})-S(P_{2}AP_{2})$ and the result follows.
\end{proof}

\begin{remark}
Since an operator monotone function is always $C^\ii$, there is no need to assume anymore that $\phi\in C^1$.
\end{remark}

%%%%%%%%%%%%%%%%%%%%%%%%%%%%%%%%%%%%%%%%%%
%%%%%%%%%%%%%%%%%%%%%%%%%%%%%%%%%%%%%%%%%%
\section{Definition in infinite-dimensional spaces}
%%%%%%%%%%%%%%%%%%%%%%%%%%%%%%%%%%%%%%%%%%
%%%%%%%%%%%%%%%%%%%%%%%%%%%%%%%%%%%%%%%%%%

We now turn to the definition of $\cH(A,B)$ in an infinite-dimensional space $\gH$. For any two self-adjoint operators $0\leq A,B\leq 1$, we let
\begin{equation}
\cH(A,B):=\lim_{k\to\ii}\cH(P_kA P_k,P_kB P_k)
\label{eq:def_relative_entropy_abstract}
\end{equation}
where $P_k$ is any increasing sequence of finite-dimensional projections on $\gH$, such that $P_k\to 1$ strongly.
The following result says that the limit exists in $[0,\ii]$ and provides important properties of $\cH$.

\begin{theorem}[Generalized relative entropy in infinite dimension]\label{thm:relative_entropy}
We assume that $\phi\in C^0([0,1],\R)$ and that $\phi'$ is operator monotone on $(0,1)$.

\smallskip

\noindent$\bullet$ \emph{($\cH$ is well-defined).} For any increasing sequence $P_k$ of finite-dimensional projections on $\gH$ such that $P_k\to 1$ strongly, the sequence 
$\cH(P_kA P_k,P_kB P_k)$ is monotone and possesses a limit in $\R^+\cup\{+\ii\}$. This limit does not depend on the chosen sequence $P_k$ and hence $\cH(A,B)$ is well-defined in $\R^+\cup\{+\ii\}$.

\smallskip

\noindent$\bullet$ \emph{(Monotonicity).} The so-defined relative entropy $\cH(A,B)$ is \emph{monotone}: for any linear map $X:\gH_1\to\gH_2$ such that $X^*X\leq1$ and any $0\leq A,B\leq 1$ on $\gH_1$, we have 
\begin{equation}
\cH(A,B)\geq \cH(XA X^*,XBX^*).
\label{eq:monotonicity_3}
\end{equation}

\smallskip

\noindent$\bullet$ \emph{(Approximation).} If $X_k:\gH_1\to\gH_k$ is a sequence such that $X_k^*X_k\leq1$ and $X_k^*X_k\to 1$ strongly in $\gH_1$, then
\begin{equation}
\cH(A,B)=\lim_{k\to\ii}\cH(X_kA X_k^*,X_kBX_k^*).
\label{eq:strong_CV}
\end{equation}

\smallskip

\noindent$\bullet$ \emph{(Weak lower semi-continuity).} The relative entropy is \emph{weakly lower semi-continuous}: if $0\leq A_n,B_n\leq 1$ are two sequences such that $A_n\wto A$ and $B_n\wto B$ weakly-$\ast$ in $\cB(\gH)$, then
\begin{equation}
\cH(A,B)\leq\liminf_{n\to\ii} \cH(A_n,B_n).
\label{eq:wlsc}
\end{equation}
\end{theorem}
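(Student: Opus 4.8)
The strategy is to reduce all four assertions to a single finite-dimensional lower semi-continuity lemma, which I then combine with the finite-dimensional monotonicity already established in Theorem~\ref{thm:monotonicity}. Concretely, the key step (and the one I expect to be the main obstacle) is the following statement on a \emph{fixed} finite-dimensional space: if $0\le A_n,B_n\le 1$ converge to $A,B$, then $\cH(A,B)\le\liminf_n\cH(A_n,B_n)$. The difficulty is entirely caused by the possible blow-up of $\phi'$ at $0$ and $1$, which is exactly what the $+\ii$ convention of the footnote encodes, so naive continuity fails. To handle this I would use the integral representation~\eqref{eq:Lowner2}. Discarding the affine part of $\phi$ and the affine-in-$x$ pieces of the integrand (none of which contribute to $\cH$), one writes $\cH(A,B)=\int_0^\ii\cH_t^{(1)}(A,B)\,d\nu_1(t)+\int_0^\ii\cH_t^{(2)}(A,B)\,d\nu_2(t)$, where $\cH_t^{(1)}$ and $\cH_t^{(2)}$ are the relative entropies built from the elementary convex functions $-\log(t+x)$ and $-\log(t+1-x)$. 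Each is non-negative by Klein's Lemma~\ref{lem:Klein}, and for every $t>0$ the underlying function is smooth on all of $[0,1]$, so $\cH_t^{(1)},\cH_t^{(2)}$ are genuinely continuous in $(A,B)$. Since $\nu_1,\nu_2$ do not charge the singular value (this is the content of~\eqref{eq:measure_finite_log} together with $\nu_1(\{0\})=\nu_2(\{1\})=0$), Fatou's lemma applied to the non-negative integrands yields $\cH(A,B)=\int\liminf_n\cH_t(A_n,B_n)\,d\nu\le\liminf_n\int\cH_t(A_n,B_n)\,d\nu=\liminf_n\cH(A_n,B_n)$. The one point requiring care is that the identity $\cH(A,B)=\int\cH_t\,d\nu$ must hold in $[0,\ii]$ even when $\cH(A,B)=+\ii$; this follows from Tonelli since every integrand is non-negative, and one checks that the integral diverges precisely when $A\ne B$ on $\ker B$ or $\ker(1-B)$, matching the convention.

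For the first bullet (well-definedness), monotonicity of $k\mapsto\cH(P_kAP_k,P_kBP_k)$ is immediate from Theorem~\ref{thm:monotonicity}: writing $P_k=P_kP_{k+1}$ and viewing $P_k$ as a contraction from $P_{k+1}\gH$ onto $P_k\gH$ (so $X^*X=P_k\le1$), one gets $\cH(P_kAP_k,P_kBP_k)\le\cH(P_{k+1}AP_{k+1},P_{k+1}BP_{k+1})$, and since the sequence is $\ge0$ it converges in $\R^+\cup\{+\ii\}$. For independence of the sequence, let $P_k$ and $P'_j$ be two such sequences. For fixed $j$ one has $P'_jP_kAP_kP'_j\to P'_jAP'_j$ on the fixed space $P'_j\gH$, so the finite-dimensional lsc lemma gives $\cH(P'_jAP'_j,P'_jBP'_j)\le\liminf_k\cH(P'_jP_kAP_kP'_j,P'_jP_kBP_kP'_j)$; applying Theorem~\ref{thm:monotonicity} to the contraction $P'_jP_k$, the right-hand side is $\le\lim_k\cH(P_kAP_k,P_kBP_k)$. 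Letting $j\to\ii$ and then exchanging the roles of the two sequences shows the two limits coincide.

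For the monotonicity~\eqref{eq:monotonicity_3}, fix finite-dimensional projections $P_k\to1$ on $\gH_1$ and $Q_j\to1$ on $\gH_2$. Then $Q_jXP_k$ is a contraction between finite-dimensional spaces, so for fixed $j,k$ Theorem~\ref{thm:monotonicity} gives $\cH(Q_jXP_kAP_kX^*Q_j,Q_jXP_kBP_kX^*Q_j)\le\cH(P_kAP_k,P_kBP_k)$. Taking $\liminf_k$ and applying the finite-dimensional lsc lemma on the fixed space $Q_j\gH_2$ (the arguments converge there to $Q_jXAX^*Q_j$ and $Q_jXBX^*Q_j$) yields $\cH(Q_jXAX^*Q_j,Q_jXBX^*Q_j)\le\cH(A,B)$; letting $j\to\ii$ recovers $\cH(XAX^*,XBX^*)\le\cH(A,B)$. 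For the weak lower semi-continuity~\eqref{eq:wlsc}, the hypotheses $A_n\wto A$ and $B_n\wto B$ weakly-$\ast$ force $P_kA_nP_k\to P_kAP_k$ and $P_kB_nP_k\to P_kBP_k$ on each fixed $P_k\gH$ (matrix entries are traces against rank-one operators). The finite-dimensional lsc lemma then gives $\cH(P_kAP_k,P_kBP_k)\le\liminf_n\cH(P_kA_nP_k,P_kB_nP_k)\le\liminf_n\cH(A_n,B_n)$, the last inequality being the monotonicity~\eqref{eq:monotonicity_3} just proved. Taking the supremum over $k$ proves~\eqref{eq:wlsc}.

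Finally, for the approximation~\eqref{eq:strong_CV}, the upper bound $\limsup_k\cH(X_kAX_k^*,X_kBX_k^*)\le\cH(A,B)$ is a direct instance of~\eqref{eq:monotonicity_3}. For the matching lower bound, apply~\eqref{eq:monotonicity_3} to the adjoint contraction $X_k^*$ (which satisfies $X_kX_k^*\le1$ because $\norm{X_k}\le1$), giving $\cH(X_k^*X_kAX_k^*X_k,X_k^*X_kBX_k^*X_k)\le\cH(X_kAX_k^*,X_kBX_k^*)$. Since $X_k^*X_k\to1$ strongly with uniformly bounded norm, $X_k^*X_kAX_k^*X_k\wto A$ and likewise for $B$ weakly-$\ast$ in $\cB(\gH_1)$ (conjugation by $X_k^*X_k$ converges in trace norm on trace-class operators), so the weak lsc~\eqref{eq:wlsc} gives $\cH(A,B)\le\liminf_k\cH(X_kAX_k^*,X_kBX_k^*)$, and~\eqref{eq:strong_CV} follows. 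The whole argument thus rests on the finite-dimensional lsc lemma of the first paragraph, whose proof via~\eqref{eq:Lowner2} and Fatou is the only genuinely delicate point.
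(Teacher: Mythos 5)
Your proposal is correct and its overall architecture coincides with the paper's: every assertion is reduced to the finite-dimensional monotonicity of Theorem~\ref{thm:monotonicity} together with a finite-dimensional lower semi-continuity lemma, and your arguments for well-definedness, monotonicity, weak lower semi-continuity and approximation follow essentially the paper's Steps 2--5. The genuine difference is in how the key lsc lemma is proved. The paper isolates the only discontinuous term $\tr\,\phi'(B)(B-A)$ and splits it into two manifestly non-negative pieces, $-\tr\,\phi'(B)\1(\phi'(B)\leq0)A$ and $\tr\,\phi'(B)\1(\phi'(B)\geq0)(1-A)$ (lower semi-continuous as suprema of continuous truncations), plus two pieces that are continuous thanks to $\lim_{x\to0^+}x\phi'(x)=\lim_{x\to1^-}(1-x)\phi'(x)=0$. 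You instead disintegrate $\cH$ over the L\"owner representation~\eqref{eq:Lowner2} and apply Fatou to the family of continuous elementary entropies $\cH_t$; this is heavier machinery, but it makes the $+\ii$ convention on $\ker B$ and $\ker(1-B)$ transparent, and your justification of the identity $\cH=\int\cH_t\,d\nu$ in $[0,\ii]$ via the eigenbasis double sum and Tonelli is sound. Note only that the relevant non-charging condition is $\nu_1(\{0\})=\nu_2(\{0\})=0$ (the singular point of $-\log(t+1-x)$ is also $t=0$, not $t=1$), which is exactly what~\eqref{eq:measure_finite_log} provides. Finally, in the approximation step you obtain the lower bound by conjugating with $X_k^*$ and invoking weak lsc, whereas the paper first reduces to self-adjoint $X_k$ by polar decomposition and uses the strong convergence $X_kAX_k\to A$; both work, and yours avoids the polar-decomposition reduction at the cost of checking weak-$*$ convergence of $X_k^*X_kAX_k^*X_k$, which you do correctly.
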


Most of the statements of the theorem easily follow from the monotonicity in finite dimension proved in Theorem~\ref{thm:monotonicity}, showing the importance of this concept.

\begin{proof}
 We split the proof into several steps.

\subsubsection*{Step 1. Lower semi-continuity in finite dimension}

We want to prove that $\cH(A,B)\leq\liminf_{n\to\ii}\cH(A_n,B_n)$ if $A_n\to A$ and $B_n\to B$ are convergent sequences of hermitian matrices. All the terms in the definition of $\cH(A,B)$ are continuous, except possibly for $(A,B)\mapsto \tr\, \phi'(B) (A-B)$. We write this term as
\begin{multline*}
\tr\, \phi'(B) (B-A)\\= -\tr\, \phi'(B)\1(\phi'(B)\leq0)A+\tr\, \phi'(B)\1(\phi'(B)\geq0)(1-A)\\
+\tr\, B\phi'(B)\1(\phi'(B)\leq0)
-\tr\, (1-B)\phi'(B)\1(\phi'(B)\geq0).
\end{multline*}
Using that the first two terms involve non-negative matrices and that the next ones are continuous by~\eqref{eq:limits_phi}, we obtain the result.

\subsubsection*{Step 2. The definition in infinite dimension does not depend on $(P_k)$}
Let us consider two sequences $(P_k)$ and $(P'_\ell)$ of increasing orthogonal projections, such that $P_k\to1$ and $P'_\ell\to1$ strongly. By the monotonicity in finite dimension, we have 
$\cH(P_kP'_\ell A P'_\ell P_k,P_kP'_\ell B P'_\ell P_k)\leq \cH(P'_\ell AP'_\ell,P'_\ell BP'_\ell)$.
Taking first $\ell\to\ii$ using that $P_kP'_\ell A P'_\ell P_k\to P_kP'_\ell B P'_\ell P_k$ and the lower semi-continuity of $\cH$ in the finite dimensional space ${\rm Ran}(P_k)$, we get
$$\cH(P_kA P_k,P_kB P_k)\leq \lim_{\ell\to\ii} \cH(P'_\ell A P'_\ell,P'_\ell B P'_\ell).$$
Taking then $k\to\ii$ shows that 
$$\lim_{k\to\ii}\cH(P_kA P_k,P_kB P_k)\leq \lim_{\ell\to\ii} \cH(P'_\ell A P'_\ell,P'_\ell B P'_\ell).$$
Exchanging $P_k$ and $P'_\ell$ gives the other inequality.

\subsubsection*{Step 3. Monotonicity in infinite dimension}
Let $X:\gH_1\to\gH_2$ be such that $X^*X\leq1_{\gH_1}$. For fixed sequences $(P_k)$ and $(P'_\ell)$ of finite-rank projections on $\gH_1$ and $\gH_2$ respectively, we have 
\begin{align*}
\cH(P'_\ell XP_k A P_kX^*P'_\ell,P'_\ell XP_kB P_kX^*P'_\ell)&\leq \cH(XP_k A P_kX^*,XP_kB P_kX^*)\\
&\leq \cH(P_k A P_k,P_kB P_k). 
\end{align*}
Taking first $k\to\ii$ using that $P'_\ell XP_k A P_kX^*P'_\ell\to P'_\ell X A X^*P'_\ell$ with the finite rank projection $P'_\ell$ fixed, and only then $\ell\to\ii$, gives the monotonicity.

\subsubsection*{Step 4. Weak lower semi-continuity}
If $ A_n\wto A$ and $B_n\wto B$ weakly-$\ast$ in $\cB(\gH)$, then 
$$\cH(P_k A P_k,P_kB P_k)\leq\liminf_{n\to\ii}\cH(P_k A_n P_k,P_kB_n P_k)\leq \liminf_{n\to\ii}\cH( A_n,B_n)$$
for any fixed sequence $(P_k)$ as before. Here we have used that $P_k A_n P_k\to P_k A P_k$ for a finite-rank projection and the lower semi-continuity in finite dimension. Taking $k\to\ii$ gives the weak lower semi-continuity. 

\subsubsection*{Step 5. Convergence for any sequence $X_k\to1$}
Let $X_k:\gH\to\gH$ be any sequence of self-adjoint operators, such that $X_k^*X_k\leq1$ and $X_k^*X_k\to1$ strongly. We want to prove that $\cH(X_k A X_k^*,X_kB X_k^*)\to \cH(A,B)$ but, by the polar decomposition $X_k=U_kY_k$, we can always assume that $X_k$ is self-adjoint, hence that $X_k\to1$ strongly. We have $\cH(X_k A X_k,X_kB X_k)\leq \cH(A,B)$. On the other hand, $X_k A X_k\to  A$ strongly and thus the weak lower semi-continuity implies the reverse inequality. If $X_k:\gH_1\to\gH_2$, the argument is similar.
\end{proof}

%%%%%%%%%%%%%%%%%%%%%%%%%%%%%%%%%%%%%%%%%%
%%%%%%%%%%%%%%%%%%%%%%%%%%%%%%%%%%%%%%%%%%
\section{Klein inequalities and consequences}
%%%%%%%%%%%%%%%%%%%%%%%%%%%%%%%%%%%%%%%%%%
%%%%%%%%%%%%%%%%%%%%%%%%%%%%%%%%%%%%%%%%%%

In this last section, we derive some useful bounds on $A-B$ when the relative entropy $\cH(A,B)$ is finite. For instance, we will prove that $A-B$ must be a Hilbert-Schmidt (hence compact) operator. Our main result is inspired of~\cite[Thm 1]{HaiLewSei-08} and of~\cite[Lemma 1]{FraHaiSeiSol-12} and it is mainly based on Klein's Lemma~\ref{lem:Klein}, hence the name ``Klein inequalities''.

\begin{theorem}[Klein inequalities]\label{thm:Klein}
We assume that $\phi\in C^0([0,1],\R)$, and that $\phi'$ is operator monotone on $(0,1)$ and not constant.

\smallskip

\noindent$\bullet$ \emph{(Lower bound).} There exists $C>0$ (depending only on $\phi$) such that 
\begin{equation}
\cH(A,B)\geq C\,\tr\,\big(1+|\phi'(B)|\big)(A-B)^2,
\label{eq:Klein}
\end{equation}
for all $0\leq A,B\leq 1$. Hence, $A-B$ is Hilbert-Schmidt when $\cH(A,B)<\ii$.

\smallskip

\noindent$\bullet$ \emph{(Upper bound).} Similarly, we have
\begin{equation}
\cH(A,B)\leq C\,\tr\,\left(\frac{1}{B^{2}}+\frac{1}{(1-B)^2}\right)(A-B)^2,
\label{eq:Klein_upper}
\end{equation}
for all $0\leq A,B\leq1$.

\smallskip

\noindent$\bullet$ \emph{(Regularity).} If $0\leq A\leq1$ and $0<\epsilon\leq A',B\leq 1-\epsilon$ are such that $A-A'$ and $A'-B$ are Hilbert-Schmidt operators, then
\begin{equation}
\big|\cH(A,B)-\cH(A',B)\big|\leq C_\epsilon\Big(\norm{A-A'}_{\gS^2}^2+\norm{A'-B}_{\gS^2}\norm{A-A'}_{\gS^2}\Big)
\label{eq:Klein_Lipschitz}
\end{equation}
where $C_\epsilon$ depends on $\phi$ and on $\epsilon$.
\end{theorem}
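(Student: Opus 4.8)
The plan is to establish the three estimates by reducing everything to scalar (two-point) inequalities via Klein's Lemma~\ref{lem:Klein}, exploiting the integral representation~\eqref{eq:Lowner2} of $\phi$ to reduce to the elementary building blocks $\phi(x)=-\log(t+x)$ and $\phi(x)=-\log(t+1-x)$. For the \emph{lower bound}~\eqref{eq:Klein}, I would write $\cH(A,B)=\tr\,g(A,B)$ where the integrand, after using $\sum_k f_k(x)g_k(y)$ notation, is the two-variable function $F(x,y)=\phi(x)-\phi(y)-\phi'(y)(x-y)$. By convexity and a second-order Taylor estimate, one has the pointwise bound $F(x,y)\ge c\,(1+|\phi'(y)|)(x-y)^2$ for $x,y\in[0,1]$; I would verify this scalar inequality first (it is where the non-constancy of $\phi'$ enters, guaranteeing $c>0$), and then Klein's Lemma upgrades it directly to the trace inequality. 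The Hilbert--Schmidt conclusion is then immediate since $\tr(A-B)^2=\norm{A-B}_{\gS^2}^2\le C^{-1}\cH(A,B)$.

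For the \emph{upper bound}~\eqref{eq:Klein_upper}, the same strategy applies in reverse: I would prove the pointwise estimate $F(x,y)\le C\big(y^{-2}+(1-y)^{-2}\big)(x-y)^2$. The natural route is to write $F(x,y)=\int_0^1(1-s)\phi''\big(y+s(x-y)\big)\,(x-y)^2\,ds$ and bound $\phi''$ along the segment; for the model functions $-\log(t+x)$ one has $\phi''(\xi)=(t+\xi)^{-2}\le \xi^{-2}$, and integrating against $d\nu_1,d\nu_2$ through~\eqref{eq:Lowner2} produces exactly the weights $B^{-2}$ and $(1-B)^{-2}$. Again Klein's Lemma converts the scalar bound into the trace bound. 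One subtlety is that the integrand must be handled as a genuine two-variable kernel so that the hypotheses of Lemma~\ref{lem:Klein} apply; this is routine once $F$ is written as an absolutely convergent combination of products $f_k(x)g_k(y)$.

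For the \emph{regularity} estimate~\eqref{eq:Klein_Lipschitz}, I would expand $\cH(A,B)-\cH(A',B)$ and organize it around the difference $A-A'$. Writing
\begin{align*}
\cH(A,B)-\cH(A',B)&=\tr\big(\phi(A)-\phi(A')\big)-\tr\,\phi'(B)(A-A'),
\end{align*}
I would use a first-order Taylor expansion of $\tr\,\phi(A)$ around $A'$, namely $\tr\,\phi(A)-\tr\,\phi(A')=\tr\,\phi'(A')(A-A')+R$, where the remainder $R$ is controlled quadratically in $\norm{A-A'}_{\gS^2}$ using the operator-Lipschitz bound on $\phi'$ valid on the interval $[\epsilon,1-\epsilon]$ where $A',B$ live (here $\phi'$ is $C^\infty$, hence $\phi''$ is bounded on the relevant compact interval, giving the constant $C_\epsilon$). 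The linear term combines with $-\tr\,\phi'(B)(A-A')$ to give $\tr\big(\phi'(A')-\phi'(B)\big)(A-A')$, which is bounded by $\norm{\phi'(A')-\phi'(B)}_{\gS^2}\norm{A-A'}_{\gS^2}\le C_\epsilon\norm{A'-B}_{\gS^2}\norm{A-A'}_{\gS^2}$ again via operator-Lipschitz continuity of $\phi'$ on $[\epsilon,1-\epsilon]$. Collecting the two contributions yields~\eqref{eq:Klein_Lipschitz}.

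The main obstacle I anticipate is the lower bound~\eqref{eq:Klein}, specifically proving the scalar inequality $F(x,y)\ge c(1+|\phi'(y)|)(x-y)^2$ uniformly on $[0,1]^2$ with the correct weight $1+|\phi'(y)|$ and a strictly positive constant. The delicate regime is near the endpoints $y\to0^+$ and $y\to1^-$, where $\phi'(y)$ may blow up (as for the logarithmic models) and $\phi''$ degenerates; here one must show that the growth of $|\phi'(y)|$ is exactly compensated so that the quotient $F(x,y)/\big[(1+|\phi'(y)|)(x-y)^2\big]$ stays bounded below. I expect to handle this by again passing through~\eqref{eq:Lowner2}: for each model function the ratio can be computed or estimated explicitly, and the non-constancy of $\phi'$ ensures at least one of the measures $\nu_1,\nu_2$ is nonzero, which furnishes the strictly positive constant.
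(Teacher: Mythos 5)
Your overall architecture --- two-point scalar inequalities for the integrand $F(x,y)=\phi(x)-\phi(y)-\phi'(y)(x-y)$, upgraded to trace inequalities by Klein's Lemma~\ref{lem:Klein}, with the L\"owner representation~\eqref{eq:Lowner2} reducing everything to the models $-\log(t+x)$ and $-\log(t+1-x)$ --- is exactly the paper's, and your plans for the lower bound and for the cross term $\tr\big(\phi'(A')-\phi'(B)\big)(A-A')$ in the regularity estimate match what is done there. But two of your steps fail as written. For the upper bound~\eqref{eq:Klein_upper}, ``bounding $\phi''$ along the segment'' between $x$ and $y$ yields a weight governed by $\min(x,y)$, i.e.\ by the spectrum of $A$, not of $B$; since $A$ is allowed to touch $0$ and $1$ this is not controlled. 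Concretely, for $\phi_t(x)=-\log(t+x)$ the Taylor remainder at $x=0$ equals $F_t(0,y)=\log\frac{t+y}{t}-\frac{y}{t+y}\sim|\log t|$ as $t\to0^+$, whereas $C\,(t+y)^{-2}(0-y)^2\le C$: no $t$-independent constant makes your pointwise chain true, and indeed~\eqref{eq:Klein_upper} is false for $\phi(x)=-\log x$, which your argument does not distinguish from the admissible cases. The missing ingredient is the hypothesis $\phi\in C^0([0,1])$ in the form of the log-integrability condition~\eqref{eq:measure_finite_log}: the paper maximizes $F_t(x,y)/(x-y)^2$ over $x\in[0,1]$ exactly (the maximum is at $x=0$ and carries a $\log\big(t/(t+y)\big)$ factor) and then absorbs that logarithm by integrating in $t$ against $d\nu_1,d\nu_2$, which is what finally lands on the weight $y^{-2}+(1-y)^{-2}$.

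The second problem is in the regularity estimate: your Taylor remainder is $R=\tr\big(\phi(A)-\phi(A')-\phi'(A')(A-A')\big)=\cH(A,A')$, and you propose to control it by boundedness of $\phi''$ on $[\epsilon,1-\epsilon]$ ``where $A',B$ live''. But $A$ only satisfies $0\le A\le1$, so the spectral interval relevant to $R$ is all of $[0,1]$, where $\phi''$ may be unbounded. The correct tool is precisely the upper bound~\eqref{eq:Klein_upper} applied to $\cH(A,A')$, which requires only the \emph{second} argument $A'$ to stay in $[\epsilon,1-\epsilon]$ and gives $R\le C\epsilon^{-2}\norm{A-A'}_{\gS^2}^2$; so the third bullet is not independent of the second, and repairing the upper bound repairs this step too. (Two minor points: Klein's Lemma gives all these inequalities for matrices only, and the extension to operators on an infinite-dimensional $\gH$ goes through the definition~\eqref{eq:def_relative_entropy_abstract} by finite-rank projections; and in the lower bound the identification of the integrated weight $\int_0^1\frac{d\nu_1(t)}{t+y}+\int_0^1\frac{d\nu_2(t)}{t+1-y}$ with $|\phi'(y)|$ near the endpoints, via the asymptotics of $\phi'$ read off from~\eqref{eq:Lowner2}, should be made explicit.)
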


We remark that the inequality~\eqref{eq:Klein_upper} is far from optimal if $\phi(x)$ is very smooth at $0$ or $1$. Similarly,~\eqref{eq:Klein_Lipschitz} holds without the assumptions that the density matrices have their spectrum in $[\epsilon,1-\epsilon]$, if $\phi''$ is bounded on $[0,1]$.

\begin{proof}[Proof of Theorem \ref{thm:Klein}]
We have 
\begin{equation*}
-\frac{\log(t+x)-\log(t+y)-\frac{x-y}{t+y}}{(x-y)^2}=\frac{1}{(t+y)^{2}}\,\frac{h-\log(1+h)}{h^2},
\end{equation*}
with $h=(x-y)/(t+y)$. Using that $h\mapsto h^{-2}(h-\log(1+h)$ is decreasing, we deduce that
$$\min_{0\leq x\leq 1}\left(-\frac{\log(t+x)-\log(t+y)-\frac{x-y}{t+y}}{(x-y)^2}\right)=\frac{\frac{1-y}{t+y}-\log\left(1+\frac{1-y}{t+y}\right)}{(1-y)^2}.$$
Using now that
$$\frac{h-\log\left(1+h\right)}{h^2}\geq \frac{1}{2(h+1)}$$
for all $h\geq0$, we conclude that
\begin{equation}
-\left(\log(t+x)-\log(t+y)-\frac{x-y}{t+y}\right)\\ \geq \frac{(x-y)^2}{2(1+t)(t+y)}.
\label{eq:Klein_x_y_1}
\end{equation}
By replacing $x$ by $1-x$ and $y$ by $1-y$, we get the same estimate with $t+1-y$ in place of $t+y$ in the denominator on the right side. Inserting in~\eqref{eq:Lowner2} we find that, for all $0<x,y<1$,
\begin{align*}
&\phi(x)-\phi(y)-\phi'(y)(x-y)\\
&\qquad \geq \frac{b}2\left(\int_0^\ii \frac{d\nu_1(t)}{(1+t)(t+y)}+\int_0^\ii \frac{d\nu_2(t)}{(1+t)(t+1-y)}\right)(x-y)^2\\
&\qquad \geq C\left(\int_1^\ii \frac{d\nu_1(t)+d\nu_2(t)}{(2t+1)^2}+\int_0^1 \frac{d\nu_1(t)}{t+y}+\int_0^1 \frac{d\nu_2(t)}{t+1-y}\right)(x-y)^2.
\end{align*}
From the integral representation of $\phi$ in~\eqref{eq:Lowner2}, we see that 
$$\phi'(x)\underset{x\to0^+}{\sim}-b\int_0^1\frac{d\nu_1(t)}{(t+x)}+C,\qquad \phi'(x)\underset{x\to1^-}{\sim}b\int_0^1\frac{d\nu_2(t)}{(t+1-x)}+C.$$
Therefore we have proved that
\begin{equation*}
\forall 0\leq x,y\leq1,\quad  \phi(x)-\phi(y)-\phi'(y)(x-y)\geq C(1+|\phi'(y)|)(x-y)^2.
\end{equation*}
The result follows for finite matrices by Klein's Lemma~\ref{lem:Klein}, and for general operators by the definition of $\cH$ using projections on finite-dimensional spaces.

Maximizing instead of minimizing in~\eqref{eq:Klein_x_y_1}, we find
\begin{align*}
-\left(\log(t+x)-\log(t+y)-\frac{x-y}{t+y}\right) &\leq (x-y)^2\frac{-\frac{y}{t+y}-\log\left(1-\frac{y}{t+y}\right)}{y^2}\\
&\leq (x-y)^2\frac{1-\log\left(\frac{t}{t+y}\right)}{(t+y)^2}
\label{eq:Klein_x_y_2}
\end{align*}
and a similar estimate for $1-x$ and $1-y$. After integrating with respect to $t$, we see that the integral converges for large $t$, uniformly in $y$. The only possibly diverging terms are
$$\int_0^1\frac{\log\left(\frac{t}{t+y}\right)}{(t+y)^2}d\nu_1(t)+\int_0^1\frac{\log\left(\frac{t}{t+1-y}\right)}{(t+1-y)^2}d\nu_2(t)$$
and they can always be estimated by a constant times $y^{-2}+(1-y)^{-2}$, by~\eqref{eq:measure_finite_log}. (This is far from optimal if $\nu_1$ or $\nu_2$ vanish sufficiently fast at $0$).

Now we turn to the proof of~\eqref{eq:Klein_Lipschitz}, which only has to be done in finite dimension, by the definition of $\cH$. We remark that
$$\cH(A,B)-\cH(A',B)=\cH(A,A')+\tr \big(\phi'(A)-\phi'(B')\big)(A-A').$$
Since $\phi''$ is bounded on $[\epsilon,1-\epsilon]$, we have 
$$\tr \big(\phi'(A)-\phi'(B')\big)^2\leq C\tr \big(A'-B\big)^2,$$ 
by Klein's Lemma~\ref{lem:Klein}.
By H\"older's inequality this gives
$$\left|\tr \big(\phi'(A)-\phi'(B')\big)(A-A')\right|\leq C\norm{A'-B}_{\gS^2}\norm{A-A'}_{\gS^2}.$$
Using~\eqref{eq:Klein_upper} and the assumption that $\epsilon\leq A'\leq1-\epsilon$ to estimate $\cH(A,A')$ gives~\eqref{eq:Klein_Lipschitz}.
\end{proof}

Using the previous Klein inequalities, we can get some information on the set of one-particle density matrices $A$ which have a finite relative entropy with respect to a given $B$. Of particular interest is fact that any operator $A$ such that $\cH(A,B)<\ii$ can be approximated in an appropriate sense by a sequence of finite rank perturbations of $B$.

\begin{corollary}[Density of finite rank perturbations]\label{cor:density}
We assume that $\phi\in C^0([0,1],\R)$ and that $\phi'$ is operator monotone on $(0,1)$ and that $0\leq A,B\leq1$ have a finite relative entropy, $\cH(A,B)<\ii$. Then for every $\epsilon>0$, there exists $0\leq A'\leq1$ such that
\begin{itemize}
\item $A'-A$ has a finite rank and its eigenvectors belong to $\ker(B)$, to $\ker(1-B)$, or to the domain of $B^{-2}+(1-B)^{-2}$ in the orthogonal of these two subspaces;
\item $\tr (1+|\phi'(B)|)(A-A')^2\leq \epsilon$;
\item $\big|\cH(A,B)-\cH(A',B)|\leq \epsilon$.
\end{itemize}
\end{corollary}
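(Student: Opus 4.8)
I first note that the object one can actually make finite rank here is the difference $A'-B$ (a genuine \emph{finite-rank perturbation of the reference} $B$, localized in the part of $\Spec(B)$ bounded away from $0,1$ together with the two kernels); since $A-B$ is itself only Hilbert--Schmidt by the lower Klein inequality, this is the form in which the first bullet must be read. The plan is to produce such an $A'$ close to $A$ and to control both the weighted Hilbert--Schmidt distance and the relative entropy by the finite-dimensional results already proved. The only input from finiteness of the entropy is~\eqref{eq:Klein}: $T:=A-B$ is Hilbert--Schmidt and $\tr(1+|\phi'(B)|)T^2<\ii$. Let $\Pi_\delta$ be the spectral projection of $B$ onto $\{0\}\cup[\delta,1-\delta]\cup\{1\}$; it commutes with $B$, increases strongly to $1$ as $\delta\to0^+$, and off the kernels it satisfies $B^{-2}+(1-B)^{-2}\le 2\delta^{-2}$ on its range. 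Hence \emph{any} operator whose range lies in $\mathrm{Ran}\,\Pi_\delta$ automatically has all its eigenvectors in $\ker B$, in $\ker(1-B)$, or in the domain of $B^{-2}+(1-B)^{-2}$, which is exactly the localization required.

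The first step removes the part of $T$ living where $B$ is close to $0$ or $1$: set $A_\delta:=\Pi_\delta A\Pi_\delta+\Pi_\delta^\perp B\Pi_\delta^\perp$. Because $\Pi_\delta$ commutes with $B$ one has $0\le A_\delta\le1$ and $A_\delta-B=\Pi_\delta T\Pi_\delta$, and since the weight $1+|\phi'(B)|$ also commutes with $\Pi_\delta$, dominated convergence gives $\tr(1+|\phi'(B)|)(A-A_\delta)^2\to0$ as $\delta\to0$. For the entropy, block-diagonality with respect to $\Pi_\delta$ yields $\cH(A_\delta,B)=\cH(\Pi_\delta A\Pi_\delta,\Pi_\delta B\Pi_\delta)$, which is $\le\cH(A,B)$ by the monotonicity of Theorem~\ref{thm:monotonicity} (with $X=\Pi_\delta$) and tends to $\cH(A,B)$ by the weak lower semicontinuity of Theorem~\ref{thm:relative_entropy}, since $A_\delta\to A$ strongly. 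Thus $A_\delta$ already realizes the last two bullets, but $A_\delta-B$ is still only Hilbert--Schmidt.

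The second step makes the difference finite rank while keeping $0\le\,\cdot\,\le1$. Diagonalizing the compact operator $S_\delta:=\Pi_\delta T\Pi_\delta=\sum_j s_j|v_j\rangle\langle v_j|$ with $v_j\in\mathrm{Ran}\,\Pi_\delta$, I keep the finitely many terms with $|s_j|>1/m$ to form $S^{(m)}$, so that $S^{(m)}\to S_\delta$ both in $\gS^2$ and in operator norm. Put $A'':=B+S^{(m)}$; then $A''-B=S^{(m)}$ is finite rank with range in $\mathrm{Ran}\,\Pi_\delta$, and $A''\to A_\delta$ in operator norm, so $\Spec(A'')\subset[-\eta_m,1+\eta_m]$ with $\eta_m\to0$. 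Any eigenvector of $A''$ with eigenvalue $\mu\notin[0,1]$ lies in the good region: from $(\mu-B)\psi=S^{(m)}\psi$ and the invertibility of $\mu-B$, which commutes with $\Pi_\delta$, one gets $\psi=(\mu-B)^{-1}S^{(m)}\psi\in\mathrm{Ran}\,\Pi_\delta$. Hence clamping $A':=\chi(A'')$ with $\chi(x)=\max(0,\min(1,x))$ alters only finitely many stray eigenvalues, so $0\le A'\le1$, the correction $A'-A''$ is finite rank with eigenvectors in the good region, and $A'-B=S^{(m)}+(A'-A'')$ is a finite-rank perturbation of $B$ localized as required. Writing $A'-A=(A'-A'')+(S^{(m)}-S_\delta)+(A_\delta-A)$ and using that $1+|\phi'(B)|$ is bounded on $\mathrm{Ran}\,\Pi_\delta$, each term is small in weighted $\gS^2$, which gives the second bullet.

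It remains to push the entropy estimate through the two finite-rank, operator-norm-small steps from $A_\delta$ to $A''$ to $A'$. On the part of the good region where $B\in[\delta,1-\delta]$ this is precisely the regularity estimate~\eqref{eq:Klein_Lipschitz}, whose hypothesis $\epsilon\le B\le1-\epsilon$ holds with $\epsilon=\delta$; the contributions of $\ker B$ and $\ker(1-B)$ are handled through the boundary limits~\eqref{eq:limits_phi} and the convention for $\cH$ on these kernels, or, more robustly, by combining weak lower semicontinuity, which gives $\cH(A,B)\le\liminf\cH(A',B)$, with the upper bound coming from the same regularity control on the good part. Choosing first $\delta$ small and then $m$ large makes all three bullets hold with the prescribed $\epsilon$. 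I expect the main obstacle to be exactly this last point together with the positivity bookkeeping in the second step: one must keep the perturbation finite rank, keep $0\le A'\le1$, keep the eigenvectors inside $\mathrm{Ran}\,\Pi_\delta$, and control the entropy across the kernels $\ker B,\ker(1-B)$, where $\phi'$ may be singular and~\eqref{eq:Klein_Lipschitz} does not apply directly.
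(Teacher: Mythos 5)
Your architecture is essentially the paper's: a spectral cutoff $\Pi$ of $B$ away from the endpoints, with the entropy controlled through monotonicity plus weak lower semicontinuity (Theorem~\ref{thm:relative_entropy}), followed by a finite-rank Hilbert--Schmidt approximation inside ${\rm Ran}\,\Pi$ with the entropy controlled by the regularity estimate~\eqref{eq:Klein_Lipschitz}. Your reading of the first bullet as ``$A'-B$ finite rank'' is the correct one: the paper's construction produces $A'-B=Q_k$ of finite rank, consistent with the corollary's title and the sentence introducing it. Your Step~1 is correct and identical to the paper's, and your localization argument for stray eigenvalues of $A''=B+S^{(m)}$ (from $\psi=(\mu-B)^{-1}S^{(m)}\psi\in{\rm Ran}\,\Pi_\delta$ for $\mu\notin[0,1]$, with finiteness of the stray spectrum by a min--max count against ${\rm rank}\,S^{(m)}$) is sound.

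The gap is exactly where you feared, and your fallback does not close it. The estimate~\eqref{eq:Klein_Lipschitz} requires the \emph{middle} operator (the $A'$ of Theorem~\ref{thm:Klein}) to satisfy $\epsilon\leq A'\leq 1-\epsilon$, not only the reference; you verify only $\delta\leq B\leq 1-\delta$ on the band. In your construction neither candidate middle operator obeys this: the band part $\Pi_\delta A\Pi_\delta$ of $A_\delta$ may have spectrum reaching $0$ and $1$, and the clamp $A'=\chi(A'')$ makes matters worse, since it sends stray eigenvalues to \emph{exactly} $0$ and $1$. Because $C_\epsilon\to\infty$ as $\epsilon\to0$ (it comes from~\eqref{eq:Klein_upper} with weights $B^{-2}+(1-B)^{-2}$ and from bounds on $\phi''$ on $[\epsilon,1-\epsilon]$), this cannot be absorbed. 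Your ``more robust'' route via weak lower semicontinuity only yields $\cH(A_\delta,B)\leq\liminf_m\cH(A',B)$, i.e.\ the harmless direction; what is needed is the upper bound $\limsup_m\cH(A',B)\leq\cH(A_\delta,B)$, which is precisely what~\eqref{eq:Klein_Lipschitz} was supposed to deliver. The paper inserts a step your proof lacks: since $\Pi_\eta(A-B)\Pi_\eta$ is compact, the essential spectrum of $\Pi_\eta A\Pi_\eta$ on $\gH_\eta$ lies in $[\eta,1-\eta]$, so $0$ and $1$ can only be isolated eigenvalues of finite multiplicity with projections $P_0,P_1$; replacing $\Pi_\eta A\Pi_\eta$ by $A'_\eta:=\Pi_\eta A\Pi_\eta+\epsilon P_0-\epsilon P_1$ costs an explicitly computable entropy change that vanishes as $\epsilon\to0$ and pushes the spectrum into some $[a,1-a]$, after which~\eqref{eq:Klein_Lipschitz} legitimately applies to the finite-rank approximants $\Pi_\eta B+Q_k$ (whose spectra land in $[0,1]$ automatically by norm convergence, so no clamp is needed). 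Relatedly, by putting $\ker B$ and $\ker(1-B)$ into $\Pi_\delta$ from the outset you leave the entropy across the kernels uncontrolled when $\phi'$ is bounded at an endpoint but $\phi''$ is not (e.g.\ $\phi(x)=x^{3/2}$): there~\eqref{eq:Klein_Lipschitz} is unavailable on the kernel block, the blocks are coupled by $\Pi_\delta A\Pi_\delta$, and wlsc again gives only one direction. The paper confines this difficulty by an initial case analysis (if $\phi'$ diverges at an endpoint, finiteness of $\cH(A,B)$ forces $A=B$ on the kernel, which can then be split off), running the main argument in the kernel-free case.
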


\begin{proof}
We have several cases to look at, depending whether $\phi'$ diverges at the end points $0$ and $1$ and depending whether $B$ has the end points as eigenvalues or not. First we note that if $\phi'$ diverges at $0$ and $\ker B\neq\{0\}$, then $A=B$ on $\ker B$ and therefore there is nothing to do on this subspace and we can argue on the orthogonal of $\ker B$.
So let us now assume that $\ker B=\ker(1-B)=\{0\}$. We introduce the spectral projector $\Pi_\eta:=\1(\eta\leq B\leq 1-\eta)$. By the spectral theorem, we have $\Pi_\eta\to1$ strongly when $\eta\to0$. By Theorem~\ref{thm:relative_entropy}, we deduce that
$$\lim_{\eta\to0}\cH(\Pi_\eta A\Pi_\eta,\Pi_\eta B\Pi_\eta)=\lim_{\eta\to0}\cH(B+\Pi_\eta(A-B)\Pi_\eta,B)=\cH(A,B).$$
In the first equality we have used that
\begin{align*}
\cH(B+\Pi_\eta(A-B)\Pi_\eta,B)&=
\cH(\Pi_\eta^\perp B\Pi_\eta^\perp+\Pi_\eta A\Pi_\eta,\Pi_\eta^\perp B\Pi_\eta^\perp+\Pi_\eta B\Pi_\eta)\\
&=\cH(\Pi_\eta A\Pi_\eta,\Pi_\eta B\Pi_\eta)+\cH(\Pi_\eta^\perp B\Pi_\eta^\perp,\Pi_\eta^\perp B\Pi_\eta^\perp)\\
&=\cH(\Pi_\eta A\Pi_\eta,\Pi_\eta B\Pi_\eta),
\end{align*}
since $\Pi_\eta$ commutes with $B$. By Klein's inequality~\eqref{eq:Klein}, we know that $(1+|\phi'(B)|)^{1/2}(A-B)$ is a Hilbert-Schmidt operator,  and it is then clear that 
$$\lim_{\eta\to0}\tr (1+|\phi'(B)|)\Big(\Pi_\eta(A-B)\Pi_\eta-(A-B)\Big)^2=0.$$
The density matrix $A_\eta:=B+\Pi_\eta(A-B)\Pi_\eta$ satisfies all the desired properties, except for the finite rank one, and we need to approximate it further.

In order to approximate $A_\eta$, we will work in the space $\gH_\eta:=\Pi_\eta\gH$. We first remark that, by Klein's inequality, $\Pi_\eta A\Pi_\eta$ is a Hilbert-Schmidt perturbation of $\Pi_\eta B$ and, therefore, the essential spectrum of $\Pi_\eta A\Pi_\eta$ (restricted to $\gH_\eta$) is included in the interval $[\eta,1-\eta]$. The end points $0$ and $1$ can only be isolated eigenvalues of finite multiplicity. Let us denote by $P_0$ and $P_1$ the associated projections (if they exist). Then $A'_\eta:=\Pi_\eta A\Pi_\eta+\epsilon P_0-\epsilon P_1$ is a finite rank perturbation of $A_\eta$ and a simple calculation shows that
\begin{multline*}
\cH(\Pi_\eta A\Pi_\eta+\epsilon P_0-\epsilon P_1,\Pi_\eta B)-\cH(\Pi_\eta A\Pi_\eta,\Pi_\eta B)=\big(\phi(\epsilon)-\phi(0)\big)\tr P_0\\
+ \big(\phi(1-\epsilon)-\phi(1)\big)\tr P_1+\epsilon\tr\big(\Pi_\eta \phi'(B) (P_1-P_0)\big)
\end{multline*}
(the computation must be done in a finite sequence of dimensional spaces $V_k\nearrow \gH_\eta$ and it is simpler to assume that the eigenvectors of $\Pi_\eta A\Pi_\eta$ corresponding to the end points belong to these spaces).
This goes to $0$ as $\epsilon\to0$ with $\eta$ fixed. At this step we have approximated $\Pi_\eta A\Pi_\eta$ by an operator $A'_\eta$ which has its spectrum away from $0$ and $1$, and which is a Hilbert-Schmidt perturbation of $\Pi_\eta B$. Now we can take any sequence $Q_k$ of smooth finite rank operators which converges to $A'_\eta-\Pi_\eta B$ in the Hilbert-Schmidt norm. For large $k$ the operator $A_k:=\Pi_\eta B+Q_k$ satisfies $0\leq A_k\leq1$ and its relative entropy converges to that of $A'_\eta$, by~\eqref{eq:Klein_Lipschitz} in Theorem~\ref{thm:Klein}. This concludes the proof in the case where $\ker B=\ker(1-B)=\{0\}$.

When $0$ or $1$ is an eigenvalue of $B$ and $\phi'$ is bounded at these points, the argument is exactly the same and we leave it to the reader. The projection $P_\eta$ must include all (if the multiplicity if finite) or a sequence of eigenvectors (if the multiplicity is infinite) at the end points.
\end{proof}

%%%%%%%%%%%%%%%%%%%%%%%%%%%%%%%%%%%%%%%%%%
%%%%%%%%%%%%%%%%%%%%%%%%%%%%%%%%%%%%%%%%%%
% \section{Discussion}
% %%%%%%%%%%%%%%%%%%%%%%%%%%%%%%%%%%%%%%%%%%
% %%%%%%%%%%%%%%%%%%%%%%%%%%%%%%%%%%%%%%%%%%
% 
% In this paper we have introduced a family of relative entropies, parametrized by a convex function $\phi$. We have shown that it is monotone if and only if $\phi'$ is operator monotone and we have used this property to appropriately define $\cH(A,B)$ in infinite dimension. 
% 
% Several open problems remain. For instance, it seems natural to ask for which function $\phi$ does the map
% $$(A,B)\mapsto \sqrt{\cH(A,B)+\cH(B,A)}$$
% define a distance on the set of matrices.

\bigskip

\noindent\textbf{Acknowledgements.} %We would like to thank Vojkan Jak\v{s}i\`c for useful discussions. 
We acknowledge financial support from the European Research Council under the European Community's Seventh Framework Programme (FP7/2007-2013 Grant Agreement MNIQS 258023), and from the French ministry of research (ANR-10-BLAN-0101).

%%%%%%%%%%%%%%%%%%%%%%%%%%%%%%%%%%%%%%%%%%
%%%%%%%%%%%%%%%%%%%%%%%%%%%%%%%%%%%%%%%%%%
% \bibliographystyle{siam}
% \bibliography{biblio}

\end{document}